\newcommand{\A}{\mathcal{A}}
\newcommand{\IC}{\mathbb{C}}
\newcommand{\D}{\mathcal{D}}
\renewcommand{\H}{\mathfrak{H}}
\newcommand{\Hh}{\mathcal{H}^{(h)}}
\newcommand{\I}{\mathcal{I}}
\newcommand{\J}{\mathcal{J}}
\renewcommand{\L}{\mathscr{L}}
\newcommand{\IR}{\mathbb{R}}
\renewcommand{\S}{\mathfrak{S}}
\newcommand{\bV}{\mathbf{V}}
\newcommand{\W}{\mathcal{W}}
\newcommand{\bW}{\mathbf{W}}
\renewcommand{\epsilon}{\varepsilon}
\renewcommand{\phi}{\varphi}
\newcommand{\lra}{\longrightarrow}
\newcommand{\CE}{\mathsf{CE}}
\renewcommand{\div}{\operatorname{div}}
\newcommand{\HJB}{\mathsf{HJB}}
\newcommand{\abs}[1]{\lvert#1\rvert}
\newcommand{\norm}[1]{\lVert#1\rVert}
\theoremstyle{remark}
\newtheorem{definition}{Definition}[section]
\theoremstyle{plain}
\newtheorem{theorem}[definition]{Theorem}
\newtheorem*{theorem*}{Theorem}
\newtheorem{lemma}[definition]{Lemma}
\newtheorem{proposition}[definition]{Proposition}
\begin{document}

\title[A dual formula for the quantum transport distance]{A dual formula for the noncommutative transport distance}
\author{Melchior Wirth}
\address{IST Austria\\Am Campus 1\\3400 Klosterneuburg\\Austria}
\email{melchior.wirth@ist.ac.at}

\begin{abstract}
In this article we study the noncommutative transport distance introduced by Carlen and Maas and its entropic regularization defined by Becker and Li. We prove a duality formula that can be understood as a quantum version of the dual Benamou--Brenier formulation of the Wasserstein distance in terms of subsolutions of the Hamilton--Jacobi--Bellmann equation.
\end{abstract}

\maketitle

\section*{Introduction}

The theory of optimal transport \cite{Vil03,Vil09} has experienced rapid growth in recent years with applications in diverse fields across pure and applied mathematics. Along with this growth came a lot of interest in extending the methods of optimal transport beyond the scope of its original formulation as an optimization problem for the transport cost between two probability measures.

One such extension deals with ``quantum spaces'', where the probability measures are replaced by density matrices or density operators. Most of the work dealing with quantum optimal transport in this sense can be grouped into one of the following two categories. The first approach (see e.g. \cite{CM14a,CM17a,MM17,CGT18,DR19,BV20,CM20}) relies on a noncommutative analog of the Benamou--Brenier formulation \cite{BB00} of the Wasserstein distance for probability measures on Euclidean space
\begin{equation*}
W_2^2(\mu,\nu)=\inf\left\lbrace\int_0^1\int_{\mathbb{R}^n}\abs{v_t}^2\,d\rho_t\,dt: \rho_0=\mu,\rho_1=\nu,\dot\rho_t+\nabla\cdot(\rho_t v_t)=0\right\rbrace.
\end{equation*}
This approach has proven fruitful in applications to noncommutative functional inequalities, similar in spirit to the heuristics known as Otto calculus \cite{CM17a,CM20,DR20,WZ20}.

The second approach (see e.g. \cite{NGT15,GMP16,PCVS19,Duv20,PMTL20,PT21}) seeks to find a suitable noncommutative analog of the Monge--Kantorovich formulation \cite{Kan42} of the Wasserstein distance via couplings (or transport plans):
\begin{equation*}
W_p^p(\mu,\nu)=\inf\left\lbrace\int_{X\times X}d^p(x,y)\,d\pi(x,y):(\mathrm{pr}_1)_\#\pi=\mu,(\mathrm{pr}_2)_\#\pi=\nu\right\rbrace.
\end{equation*}
This approach also allows to consider a quantum version of Monge--Kantorovich problem for arbitrary cost functions. So far, possible connections between these two approaches in the quantum world stay elusive.

The focus of this article lies on the noncommutative transport distance $\W$ introduced in the first approach. More precisely, we prove a dual formula that is a noncommutative analog of the expression of the classical $L^2$-Wasserstein distance in terms of subsolutions of the Hamilton--Jacobi equation \cite{OV00,BGL01}
\begin{equation*}
W_2^2(\mu,\nu)=\frac 1 2 \inf\left\lbrace\int_{\mathbb{R}^n} u_1\,d\mu-\int_{\mathbb{R}^n}u_0\,d\nu: \dot u_t+\frac 1 2\abs{\nabla u_t}^2\leq 0\right\rbrace.
\end{equation*}
This result yields a noncommutative version of the dual formula obtained independently by Erbar, Maas and the author \cite{EMW19} and Gangbo, Li and Mou \cite{GLM19} for the Wasserstein-like transport distance on graphs. In fact, we prove a dual formula that is not only valid for the metric $\W$, but also for the entropic regularization recently introduced by Becker--Li \cite{BL21}.

With the notation introduced in the next section, the main result of this article reads as follows.
\begin{theorem*}
Let $\sigma\in M_n(\IC)$ be an invertible density matrix and $(P_t)$ an ergodic quantum Markov semigroup on $M_n(\IC)$ that satisfies the $\sigma$-DBC. The entropic regularization $\W_\epsilon$ of noncommutative transport distance induced by $(P_t)$ satisfies the following dual formula:
\begin{equation*}
\frac 1 2\W_\epsilon^2(\rho_0,\rho_1)=\sup\{\tau(A(1)\rho_1-A(0)\rho_0)\mid A\in \HJB^1_{\epsilon}\}.
\end{equation*}
\end{theorem*}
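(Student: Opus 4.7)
The plan is to prove the two inequalities separately, via distinct methods. For the $(\geq)$ direction I would use a straightforward chain-rule computation along admissible curves: given any pair $(\rho_t, V_t)$ solving the noncommutative continuity equation from $\rho_0$ to $\rho_1$, and any candidate $A \in \HJB^1_\epsilon$, differentiate $t \mapsto \tau(A(t)\rho_t)$. The continuity equation replaces $\dot\rho_t$ by (minus) the divergence of $V_t$; integration by parts, which at the Riemannian level is built into the $\sigma$-DBC together with the chain-rule operator of Carlen--Maas, rewrites the result as the weighted inner product $\langle \nabla A, V_t \rangle_{\rho_t}$. The $\HJB^1_\epsilon$ subsolution condition then bounds $\dot A(t)$ above by the negative Hamiltonian $-\tfrac12 \lvert \nabla A\rvert_{\rho_t}^2$ plus the $\epsilon$-regularizing term, and Young's inequality delivers the pointwise bound $\tfrac{d}{dt}\tau(A(t)\rho_t) \le \tfrac12 \langle V_t, V_t\rangle_{\rho_t} + (\epsilon\text{-term})$. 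Integrating over $[0,1]$, then taking the infimum over admissible $(\rho_t, V_t)$ on the right and the supremum over $A$ on the left, yields $\sup_A \tau(A(1)\rho_1 - A(0)\rho_0) \le \tfrac12 \W_\epsilon^2(\rho_0,\rho_1)$.

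For the $(\le)$ direction I would apply Fenchel--Rockafellar duality. Write $\tfrac12 \W_\epsilon^2$ as the infimum of a jointly convex functional $\mathcal{F}(\rho_\bullet, V_\bullet)$ — the kinetic energy plus the $\epsilon$-regularizer — over pairs subject to the linear constraint $\dot\rho_t + \mathrm{div}_\sigma V_t = 0$ with fixed endpoints. Introducing $A$ as a Lagrange multiplier for that constraint and formally interchanging $\sup$ and $\inf$ produces, after integrating by parts in time, the boundary term $\tau(A(1)\rho_1 - A(0)\rho_0)$ together with an interior pairing against $\dot A$. Computing the partial Legendre transform of $\mathcal{F}$ in $V$ at fixed $\rho$ yields exactly the kinetic Hamiltonian $\tfrac12 \lvert \nabla A\rvert^2$, while the Legendre transform of the $\epsilon$-regularizer produces precisely the extra term appearing in the definition of $\HJB^1_\epsilon$. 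Consequently the inner infimum is $-\infty$ unless $A$ satisfies the $\HJB^1_\epsilon$ inequality, in which case it equals the boundary term; taking the supremum over such $A$ then recovers $\tfrac12 \W_\epsilon^2(\rho_0, \rho_1)$.

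The principal obstacle is making this duality argument rigorous. One needs to (i) fix a functional-analytic setup — curves of density matrices in time with values in the positive cone, and tangent vector fields $V_\bullet$ in a suitable Bochner space — on which $\mathcal{F}$ is convex, coercive, and lower semicontinuous, and on which the continuity-equation constraint is a continuous linear map; (ii) verify a Slater-type constraint qualification so that Fenchel--Rockafellar delivers zero duality gap; and (iii) justify the formal endpoint integration by parts for the optimal multiplier $A$, in particular producing a maximizer regular enough for $\tau(A(t)\rho_t)$ to be meaningful at $t = 0, 1$. Ergodicity of $(P_t)$ is what keeps the underlying carré-du-champ non-degenerate and guarantees strictly positive interior competitors, while $\epsilon > 0$ supplies the strict convexity and coercivity needed for existence. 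These analytic and regularity issues, rather than the algebraic Legendre-transform calculation, are where the real work lies.
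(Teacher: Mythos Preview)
Your overall strategy matches the paper's: the easy inequality is exactly Proposition~3.2 (chain rule along admissible curves plus Young's inequality), and equality comes from Fenchel--Rockafellar duality. The paper, however, runs the duality from the opposite side. Instead of taking the Benamou--Brenier problem as primal and introducing $A$ as a multiplier for the continuity equation, it works on the primal space $E=H^1([0,1];\A_h)\times L^2([0,1];\Hh_{\A,\J})$ of pairs $(A,\bV)$: one functional $F$ encodes the constraint $\bV=\nabla A$ together with the boundary term, and $G$ is the indicator of the domain $\D=\{(A,\bV):\tau((\dot A+\epsilon\L A)\rho)+\tfrac12\norm{\bV}_\rho^2\le 0\text{ for all }\rho\in\S(\A)\}$. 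The continuity equation and the positivity of $\rho$ then emerge on the dual side when one computes $F^\ast$ and $G^\ast$. This orientation makes the qualification condition trivial---$(A_0,\bV_0)=(-t\,1_\A,0)$ is a strict interior point of $\D$ at which $F$ is finite---and one never has to impose or relax the cone constraint $\rho(t)\ge 0$ by hand; it falls out automatically from $G^\ast(B,C,\bW)=\infty$ unless $C\ge 0$. Your route, with $(\rho,\bV)$ primal, must confront that cone constraint directly, which is precisely the ``real work'' you anticipate. A further ingredient needed either way is a monotonicity inequality for the Fr\'echet derivative of $\rho\mapsto[\rho]_{\vec\Lambda}$ (Lemma~3.4 in the paper), used to show that the relevant Legendre transform is attained in the limit even when $\rho$ is only nonnegative.

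Two small corrections. First, in the formulation the paper actually dualizes, the $\epsilon$ sits in the constraint $\dot\rho+\div\bV=\epsilon\L^\dagger\rho$, not as a separate regularizer in the action; the $\epsilon\L A$ term in the HJB condition then arises simply from pairing $A$ against $\epsilon\L^\dagger\rho$, not from Legendre-transforming a Fisher-information term. Second, the argument requires neither $\epsilon>0$ nor the existence of an optimal $A$: Fenchel--Rockafellar delivers attainment of the minimum on the $E^\ast$ side (the Benamou--Brenier problem), and no maximizer on the HJB side is needed or claimed.
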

Here $\HJB^1_{\epsilon}$ stands for the set of all Hamilton--Jacobi--Bellmann subsolutions, a suitable noncommutative variant of solutions of the differential inequality
\begin{equation*}
\dot u(t)+\frac 1 2\abs{\nabla u(t)}^2-\epsilon \Delta u(t)\leq 0.
\end{equation*}

Other metrics similar to $\W$ also occur in the literature, most notably the one called the ``anticommutator case'' in \cite{CGT18,CGGT20,BL21}. In \cite{Wir18,CM20}, a class of such metrics was studied in a systematic way, and our main theorem applies in fact to this wider class of metrics. For the anticommutator case, this duality formula was obtained before in \cite{CGGT20}.

There are still some very natural questions left open. For one, we do not discuss the existence of optimizers. While for the primal problem this follows from a standard compactness argument, this question is more delicate for the dual problem, even when dealing with probability densities on discrete spaces instead of density matrices, and one has to relax the problem to obtain maximizers (see \cite[Sections 6--7]{GLM19}).

Another interesting direction would be to extend the duality result from matrix algebras to infinite-dimensional systems. While a definition of the metric $\W$ for quantum Markov semigroups on semi-finite von Neumann algebras is available \cite{Hor18,Wir18}, the problem of duality seems to be much harder to address. Even for abstract diffusion semigroups, the best known result only shows that the primal distance is the upper length distance associated with the dual distance and leaves the question of equality open \cite[Proposition 10.11]{AES16}.

\subsection*{Acknowledgments}
The author wants to thank Jan Maas for helpful comments. He acknowledges support from the European Research Council (ERC) under the European Union's Horizon 2020 research and innovation programme (grant agreement No 716117) and from the Austrian Science Fund (FWF) through project F65.

\section{Setting and basic definitions}

In this section we introduce basic facts and definitions about quantum Markov semigroups that will be used later on. In particular, we review the definition of the noncommutative transport distance from \cite{CM17a} and its entropic regularization introduced in \cite{BL21}. Our notation mostly follows \cite{CM17a,CM20}.

Let $M_n(\IC)$ denote the complex $n\times n$ matrices and let $\A$ be a unital $\ast$-subalgebra of $M_n(\IC)$. Let $\A_h$ denote the self-adjoint part of $\A$. We write $\tau$ for the normalized trace on $M_n(\IC)$ and $\H_\A$ for the Hilbert space formed by equipping $\A$ with the GNS inner product
\begin{equation*}
\langle\cdot,\cdot\rangle_{\H_A}\colon \A\times\A\to\IC,\,(A,B)\mapsto \tau(A^\ast B).
\end{equation*}
The adjoint of a linear operator $\mathscr{K}\colon \H_\A\to\H_\A$ is denoted by $\mathscr{K}^\dagger$.

We write $\S(\A)$ for the set of all density matrices on $\A$, that is, all positive elements $\rho\in\A$ with $\tau(\rho)=1$. The subset of invertible density matrices is denoted by $\S_+(\A)$.

A quantum Markov semigroup (QMS) on $\A$ is a family $(P_t)_{t\geq 0}$ of linear operators on $\A$ that satisfy the following conditions:
\begin{itemize}
\item $P_t$ is unital and completely positive for every $t\geq 0$,
\item $P_0=\mathrm{id}_\A$, $P_{s+t}=P_s P_t$ for all $s,t\geq 0$,
\item $t\mapsto P_t$ is continuous.
\end{itemize}
We consider a QMS $(P_t)$ on $\A$ which extends to a QMS on $M_n(\IC)$ satisfying the $\sigma$-DBC for some density matrix $\sigma\in \S_+(A)$, that is,
\begin{equation*}
\tau((P_t A)^\ast B \sigma)=\tau(A^\ast (P_t B)\sigma)
\end{equation*}
for $A,B\in\A$. Let $\L$ denote the generator of $(P_t)$, that is, the linear operator on $\A$ given by
\begin{equation*}
\L(A)=\lim_{t\searrow 0}\frac{P_t A-A}{t}.
\end{equation*}
We further assume that $(P_t)$ is \emph{ergodic} (or primitive), that is, the kernel of $\L$ is one-dimensional.

By Alicki's theorem \cite[Theorem 3]{Ali76}, \cite[Theorem 3.1]{CM17a} there exists a finite set $\J$, real numbers $\omega_j$ for $j\in \J$ and $V_j\in M_n(\IC)$ for $j\in \J$ with the following properties:
\begin{itemize}
\item $\tau(V_j^\ast V_k)=\delta_{jk}$ for $j,k\in\J$,
\item $\tau(V_j)=0$ for $j\in\J$,
\item $\{V_j\mid j\in\J\}=\{V_j^\ast\mid j\in\J\}$,
\item $\sigma V_j \sigma^{-1}=e^{-\omega_j}V_j$ for $j\in \J$
\end{itemize}
such that
\begin{equation*}
\L(A)=\sum_{j\in \J}\left(e^{-\omega_j/2}V_j^\ast[A,V_j]-e^{\omega_j/2}[A,V_j]V_j^\ast\right)
\end{equation*}
for $A\in \A$.

The numbers $\omega_j$ are called Bohr frequencies of $\L$ and are uniquely determined by $(P_t)$.

The matrices $V_j$ are not uniquely determined by $(P_t)$ and $\sigma$, but in the following we will fix a set $\{V_j\mid j\in\J\}$ that satisfies the preceding conditions.

Let
\begin{equation*}
\H_{\A,\J}=\bigoplus_{j\in \J}\H_\A^{(j)},
\end{equation*}
where $\H_\A^{(j)}$ is a copy of $\H_\A$ for $j\in\J$. We write $\partial_j$ for $[V_j,\cdot\,]$ and
\begin{equation*}
\nabla\colon \H_\A\to\H_{\A,\J},\,\nabla(A)=(\partial_j(A))_{j\in\J}.
\end{equation*}
We write $\div$ for the adjoint of $-\nabla$, that is,
\begin{equation*}
\div=-\sum_{j\in\J}\partial_j^\dagger.
\end{equation*}

For $X\in\A_+$ and $\alpha\in \IR$ define
\begin{equation*}
[X]_\alpha\colon \H_\A\to \H_\A,\,[X]_\alpha(A)=\int_0^1 e^{\alpha(s-1/2)}X^s A X^{1-s}\,ds.
\end{equation*}
Given $\vec\alpha=(\alpha_j)_{j\in\J}$, we define
\begin{equation*}
[X]_{\vec\alpha}\colon \H_{\A,\J}\to\H_{\A,\J},\,(A_j)_{j\in\J}\mapsto ([X]_{\alpha_j}A_j)_{j\in\J}.
\end{equation*}

Now let $\vec\omega=(\omega_j)_{j\in\J}$ with the Bohr frequencies $\omega_j$ of $\L$. For $\epsilon\geq 0$ we write $\CE_\epsilon(\rho_0,\rho_1)$ for the set of all pairs $(\rho,\bV)$ such that $\rho\in H^1([0,1];\S_+(\A))$ with $\rho(0)=\rho_0$, $\rho(1)=\rho_1$, $\bV\in L^2([0,1];\H_{\A,\J})$ and
\begin{equation*}
\dot \rho(t) +\div\bV(t)=\epsilon \L^\dagger\rho(t)
\end{equation*}
for a.e. $t\in [0,1]$.

We define a metric $\W_\epsilon$ on $\S_+(\A)$ by
\begin{equation*}
\W_\epsilon^2(\rho_0,\rho_1)=\inf_{(\rho,\bV)\in \CE_\epsilon(\rho_0,\rho_1)}\int_0^1 \langle \bV(t),[\rho(t)]_{\vec\omega}^{-1}\bV(t)\rangle\,dt.
\end{equation*}
A standard mollification argument shows that the infimum can equivalently be taken over $(\rho,\bV)\in \CE_\epsilon(\rho_0,\rho_1)$ with $\rho\in C^\infty([0,1];\S_+(\A))$.

For $\epsilon=0$, this is the noncommutative transport distance $\W$ introduced in \cite{CM17a} (as distance function associated with a Riemannian metric on $\S(\A)_+$), and for $\epsilon>0$, this is the entropic regularization of $\W$ introduced in \cite{BL21}.

By a substitution one can reformulate the minimization problem for $\W_\epsilon$ in such a way that the constraint becomes independent from $\epsilon$. For that purpose define the relative entropy of $\rho\in \S_+(\A)$ with respect to $\sigma$ by
\begin{equation*}
D(\rho\|\sigma)=\tau(\rho(\log\rho-\log \sigma))
\end{equation*}
and the Fisher information of $\rho\in \S_+(\A)$ by
\begin{equation*}
\I(\rho)=\langle [\rho]_{\vec\omega}\nabla (\log \rho-\log\sigma),\nabla(\log\rho-\log\sigma)\rangle_{\H_{\A,\J}}.
\end{equation*}

According to \cite[Theorem 1]{BL21}, one has
\begin{align*}
\W_\epsilon^2(\rho_0,\rho_1)&=\inf_{(\rho,\bW)\in \CE_0(\rho_0,\rho_1)}\int_0^1 (\langle \bW(t),[\rho(t)]_{\vec\omega}^{-1}\bW(t)\rangle+\epsilon^2\I(\rho(t)))\,dt\\
&\quad+2\epsilon(D(\rho_1\|\sigma)-D(\rho_0\|\sigma)).
\end{align*}

The metric $\W$ is intimately connected to the relative entropy and therefore well-suited to study its decay properties along the quantum Markov semigroup. For other applications, variants of the metric $\W$ have also proven useful (e.g. \cite{CGT18,CGGT20}), for which the operator $[\rho]_{\vec\omega}$ is replaced. A systematic framework of these metrics has been developed in \cite{Wir18,CM20}. It can be conveniently phrased in terms of so-called operator connections.

Let $H$ be an infinite-dimensional Hilbert space. A map $\Lambda\colon B(H)_+\times B(H)_+\to B(H)_+$ is called an \emph{operator connection} \cite{KA80} if
\begin{itemize}
\item $A\leq C$ and $B\leq D$ imply $\Lambda(A,B)\leq \Lambda(C,D)$,
\item $C\Lambda(A,B)C\leq \Lambda(CAC,CBC)$,
\item $A_n\searrow A$, $B_n\searrow B$ imply $\Lambda(A_n,B_n)\searrow \Lambda(A,B)$.
\end{itemize}
For example, for every $\alpha\in \IR$ the map
\begin{equation*}
\Lambda_\alpha\colon (A,B)\mapsto \int_0^1 e^{\alpha(s-1/2)}A^s B^{1-s}\,ds
\end{equation*}
is an operator connection.

It can be shown that every operator connection $\Lambda$ satisfies
\begin{equation*}
U^\ast \Lambda(A,B)U=\Lambda(U^\ast AU,U^\ast BU)
\end{equation*}
for $A,B\in B(H)_+$ and unitary $U\in B(H)$ \cite[Section 2]{KA80}. Embedding $\IC^n$ into $H$, one can view $A,B\in M_n(\IC)$ as bounded linear operators on $H$, and the unitary invariance of $\Lambda$ ensures that $\Lambda(A,B)$ does not depend on the embedding of $\IC^n$ into $H$.

For $X\in \A$ define
\begin{align*}
L(X)&\colon \H_{\A}\to \H_{\A},\,A\mapsto XA\\
R(X)&\colon \H_{\A}\to \H_{\A},\,A\mapsto AX.
\end{align*}
With this notation we can write
\begin{equation*}
[\rho]_\Lambda=\Lambda(L(\rho),R(\rho)).
\end{equation*}

Since $L(X)$ and $R(X)$ commute, we have
\begin{equation}\label{eq:spec_rep}
\Lambda(L(X),R(X))A=\sum_{k,l=1}^n \Lambda(\lambda_k,\lambda_l)E_k A E_l\tag{$\ast$}
\end{equation}
for $X\in \A_+$ and $A\in \H_\A$, where $(\lambda_k)$ are the eigenvalues of $X$ and $E_k$ the corresponding spectral projections.

More generally let $\vec\Lambda=(\Lambda_j)_{j\in \J}$ be a family of operator connections and define
\begin{align*}
[\rho]_{\Lambda_j}&=\Lambda_j(L(\rho),R(\rho)),\\
[\rho]_{\vec \Lambda}&=\bigoplus_{j\in \J}[\rho]_{\Lambda_j}.
\end{align*}
Then one can define a distance $\W_{\vec \Lambda}$ by
\begin{align*}
\W_{\vec\Lambda,\epsilon}(\rho_0, \rho_0)^2=\inf_{(\rho,\bV)\in\CE_\epsilon(\rho_0,\rho_1)}\int_0^1 \langle [\rho(t)]_{\vec\Lambda}^{-1}\bV(t),\bV(t)\rangle_{\H_{\A,\J}}\,dt.
\end{align*}
.

If $\Lambda_j=\Lambda_{\omega_j}$ as above, then we retain the original metric $\W_\epsilon$, while for $\Lambda_j(A,B)=\frac 1 2 (A+B)$ (and $\epsilon=0$) one obtains the distance studied in \cite{CGT18,CGGT20}.

%
%

Later we will make the additional assumption that $\Lambda_{j^\ast}(A,B)=\Lambda_j(B,A)$. It follows from the representation theorem of operator means \cite{KA80} that the class of metrics $\W_{\vec\Lambda,0}$ with $\vec\Lambda$ subject to this symmetry condition is exactly the class of metrics satisfying Assumptions 7.2 and 9.5 in \cite{CM20}.

For technical reasons, it can be useful to allow for curves of density matrices that are not necessarily invertible. For this purpose, we make the following convention: If $\mathcal{K}\colon \H_{\A,\J}\to\H_{\A,\J}$ is a positive operator and $\bV\in\H_{\A,\J}$, we define
\begin{equation*}
\langle \bV,\mathcal{K}^{-1}\bV\rangle_{\H_{\A,\J}}=\begin{cases}\langle\mathcal{K}\bW,\bW\rangle_{\H_{\A,\J}}&\text{if }\bV\in (\ker\mathcal{K})^\perp,\mathcal{K}\bW=\bV,\\\infty&\text{otherwise}.\end{cases}
\end{equation*}
Since $(\ker \mathcal{K})^\perp=\operatorname{ran}\mathcal{K}$ and $\mathcal{K}$ is injective on $(\ker\mathcal{K})^\perp$, the element $\bW$ in this definition exists and is unique. Moreover, this convention is clearly consistent with the usual definition if $\mathcal{K}$ is invertible.

Alternatively, as a direct consequence of the spectral theorem, $\langle \bV,\mathcal K^{-1}\bV\rangle_{\H_{\A,\J}}$ can be expressed as
\begin{equation*}
\langle \bV,\mathcal{K}^{-1}\bV\rangle_{\H_{\A,\J}}=\sum_{k=1}^m \frac 1{\lambda_k}\abs{\langle\bV,\bW_k\rangle_{\H_{\A,\J}}}^2,
\end{equation*}
where $\lambda_1,\dots,\lambda_m$ are the eigenvalues of $\mathcal{K}$ and $\W_1,\dots,W_m$ an orthonormal basis of corresponding eigenvectors.

\begin{lemma}\label{lem:monotone_conv}
If $\mathcal K_n\colon \H_{\A,\J}\to\H_{\A,\J}$, $n\in \mathbb N$, are positive invertible operators that converge monotonically decreasing to $\mathcal{K}$, then
\begin{equation*}
\langle \bV,\mathcal K_n^{-1}\bV\rangle_{\H_{\A,\J}}\nearrow \langle \langle \bV,\mathcal K^{-1}\bV\rangle_{\H_{\A,\J}}
\end{equation*}
for all $\bV\in\H_{\A,\J}$.
\end{lemma}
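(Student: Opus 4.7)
The plan is to realize $\bV\mapsto\langle \bV,\mathcal{K}^{-1}\bV\rangle_{\H_{\A,\J}}$, with the generalized-inverse convention introduced above, as a Legendre-type supremum
\begin{equation*}
\langle \bV,\mathcal{K}^{-1}\bV\rangle_{\H_{\A,\J}}=\sup_{\bW\in\H_{\A,\J}}\bigl\{2\operatorname{Re}\langle \bV,\bW\rangle_{\H_{\A,\J}}-\langle \bW,\mathcal{K}\bW\rangle_{\H_{\A,\J}}\bigr\},
\end{equation*}
which is manifestly monotone in $\mathcal{K}$. Once this identity is available, the lemma follows from the elementary fact that if a family of functionals increases pointwise, then $\sup_n\sup_\bW f_n(\bW)=\sup_\bW \lim_n f_n(\bW)$.

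The first step is therefore to establish the variational identity for every positive $\mathcal{K}$ on $\H_{\A,\J}$. If $\bV\in(\ker\mathcal{K})^\perp=\operatorname{ran}\mathcal{K}$, I would fix the unique $\bW_0\in(\ker\mathcal{K})^\perp$ with $\mathcal{K}\bW_0=\bV$, parametrize arbitrary $\bW=\bW_0+\eta$, and use self-adjointness of $\mathcal{K}$ to expand
\begin{equation*}
2\operatorname{Re}\langle \bV,\bW\rangle-\langle \bW,\mathcal{K}\bW\rangle=\langle\bW_0,\mathcal{K}\bW_0\rangle-\langle\eta,\mathcal{K}\eta\rangle,
\end{equation*}
so that the supremum is attained at $\eta=0$ and equals $\langle\bW_0,\mathcal{K}\bW_0\rangle$, precisely the value prescribed by the convention. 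If $\bV\notin(\ker\mathcal{K})^\perp$, I would pick $\eta\in\ker\mathcal{K}$ with $\operatorname{Re}\langle \bV,\eta\rangle\neq 0$ and test with $\bW=t\eta$ for $t\in\IR$, making the expression unbounded as $t$ grows and thereby matching the $+\infty$ convention.

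With the identity in hand, the conclusion is immediate. Setting $f_n(\bW):=2\operatorname{Re}\langle \bV,\bW\rangle-\langle \bW,\mathcal{K}_n\bW\rangle$ and defining $f$ analogously for $\mathcal{K}$, the hypothesis $\mathcal{K}_n\searrow\mathcal{K}$ gives $f_n\nearrow f$ pointwise. Taking the supremum in $\bW$ then encodes both the claimed monotonicity and
\begin{equation*}
\lim_n\langle \bV,\mathcal{K}_n^{-1}\bV\rangle_{\H_{\A,\J}}=\sup_n\sup_\bW f_n(\bW)=\sup_\bW f(\bW)=\langle \bV,\mathcal{K}^{-1}\bV\rangle_{\H_{\A,\J}}.
\end{equation*}
The only place that requires genuine care is the singular case of the variational identity, since that is where the ad hoc $+\infty$ convention from the excerpt has to be matched against the Legendre supremum; the rest is a direct algebraic calculation in the finite-dimensional space $\H_{\A,\J}$.
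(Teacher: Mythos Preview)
Your argument is correct. Both your proof and the paper's share the same backbone: express $\langle\bV,\mathcal{K}^{-1}\bV\rangle_{\H_{\A,\J}}$ as a supremum depending monotonically on $\mathcal{K}$, and then interchange $\sup_n$ with the other supremum. The difference lies in \emph{which} supremum representation is used. The paper writes
\[
\langle\bV,\mathcal{K}^{-1}\bV\rangle_{\H_{\A,\J}}=\sup_{\delta>0}\langle\bV,(\mathcal{K}+\delta)^{-1}\bV\rangle_{\H_{\A,\J}},
\]
justified via the spectral expression stated just before the lemma, and then uses that $(\mathcal{K}_n+\delta)^{-1}\nearrow(\mathcal{K}+\delta)^{-1}$ for each fixed $\delta$. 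You instead use the Legendre-type formula
\[
\langle\bV,\mathcal{K}^{-1}\bV\rangle_{\H_{\A,\J}}=\sup_{\bW}\bigl\{2\operatorname{Re}\langle\bV,\bW\rangle_{\H_{\A,\J}}-\langle\bW,\mathcal{K}\bW\rangle_{\H_{\A,\J}}\bigr\},
\]
verified directly in both the regular and singular cases. Your route is arguably more self-contained: it does not invoke the spectral decomposition, handles the $+\infty$ convention for $\bV\notin(\ker\mathcal{K})^\perp$ in one stroke, and the monotonicity in $\mathcal{K}$ is immediate from the form of the functional. The paper's route is shorter to state given that the spectral expression is already on the page, but it tacitly relies on operator monotonicity of $t\mapsto(t+\delta)^{-1}$ for the step $(\mathcal{K}_n+\delta)^{-1}\nearrow(\mathcal{K}+\delta)^{-1}$.
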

\begin{proof}
From the spectral expression it is easy to see that
\begin{equation*}
\langle\bV,\mathcal{K}_n^{-1}\bV\rangle_{\H_{\A,\J}}=\sup_{\delta>0}\langle \bV,(\mathcal{K}_n+\delta)^{-1}\bV\rangle_{\H_{\A,\J}}
\end{equation*}
and the same for $\mathcal{K}_n$ replaced by $\mathcal{K}$. Moreover, since $\mathcal{K}_n\searrow \mathcal{K}$, we have $(\mathcal{K}_n+\delta)^{-1}\nearrow (\mathcal{K}+\delta)^{-1}$. Thus
\begin{align*}
\langle \bV,\mathcal{K}^{-1}\bV\rangle_{\H_{\A,\J}}&=\sup_{\delta>0}\langle \bV,(\mathcal{K}+\delta)^{-1}\bV\rangle_{\H_{\A,\J}}\\
&=\sup_{\delta>0}\sup_{n\in\mathbb N}\langle\bV,(\mathcal{K}_n+\delta)^{-1}\bV\rangle_{\H_{\A,\J}}\\
&=\sup_{n\in\mathbb N}\sup_{\delta>0}\langle\bV,(\mathcal{K}_n+\delta)^{-1}\bV\rangle_{\H_{\A,\J}}\\
&=\sup_{n\in\mathbb N}\langle \bV,\mathcal{K}_n^{-1}\bV\rangle_{\H_{\A,\J}}.
\end{align*}
Since $(\langle\bV,\mathcal{K}_n^{-1}\bV\rangle_{\H_{\A,\J}})$ is monotonically increasing, this settles the claim.
\end{proof}

Write $\CE_{\epsilon}^\prime(\rho_0,\rho_1)$ for the set of all pairs $(\rho,\bV)$ such that $\rho\in H^1([0,1];\S(\A))$ with $\rho(0)=\rho_0$, $\rho(1)=\rho_1$, $\bV\in L^2([0,1];\H_{\A,\J})$ and
\begin{equation*}
\dot \rho(t)+\div\bV(t)=\epsilon\L^\dagger\rho(t)
\end{equation*}
for a.e. $t\in[0,1]$. The only difference to the definition of $\CE_\epsilon(\rho_0,\rho_1)$ is that $\rho(t)$ is not assumed to be invertible.

\begin{proposition}\label{prop:relaxed_paths_CE}
For $\rho_0,\rho_1\in \S_+(\A)$ we have
\begin{equation*}
\W_{\vec\Lambda,\epsilon}^2(\rho_0,\rho_1)=\inf_{(\rho,\bV)\in \CE_\epsilon^\prime(\rho_0,\rho_1)}\int_0^1 \langle \bV(t),[\rho(t)]_{\vec\Lambda}^{-1}\bV(t)\rangle\,dt.
\end{equation*}
\end{proposition}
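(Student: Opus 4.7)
One inclusion, $\CE_\epsilon(\rho_0,\rho_1)\subseteq\CE'_\epsilon(\rho_0,\rho_1)$, is immediate from the definitions and gives the trivial direction $\W_{\vec\Lambda,\epsilon}^2(\rho_0,\rho_1)\geq\inf_{\CE'_\epsilon}\int_0^1\langle\bV,[\rho]_{\vec\Lambda}^{-1}\bV\rangle\,dt$. The substance is the reverse inequality: given $(\rho,\bV)\in\CE'_\epsilon(\rho_0,\rho_1)$ of finite action, I would construct a family $(\rho_s,\bV_s)\in\CE_\epsilon(\rho_0,\rho_1)$ with the same endpoints whose action converges to that of $(\rho,\bV)$ as $s\to 0$.

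The approximation I have in mind is a convex combination with the invariant state $\sigma$, modulated by a time cutoff that pins the endpoints. Fix $\phi\in C^\infty([0,1];[0,1])$ with $\phi(0)=\phi(1)=0$, $\phi>0$ on $(0,1)$, and $\int_0^1\dot\phi(t)^2/\phi(t)\,dt<\infty$ (e.g.\ $\phi(t)=\sin^2(\pi t)$), and set
\[
\rho_s(t)=(1-s\phi(t))\rho(t)+s\phi(t)\sigma.
\]
Since $\phi$ vanishes at $t=0,1$ we have $\rho_s(0)=\rho_0$, $\rho_s(1)=\rho_1$; and since $\rho_s(t)\geq s\phi(t)\sigma>0$ on $(0,1)$ while $\rho_s$ coincides with $\rho$ at the invertible endpoints, $\rho_s\in H^1([0,1];\S_+(\A))$. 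Using $\L^\dagger\sigma=0$, a direct computation yields
\[
\epsilon\L^\dagger\rho_s-\dot\rho_s=(1-s\phi)\div\bV+s\dot\phi(\rho-\sigma).
\]
Since $\tau(\rho(t)-\sigma)=0$ and ergodicity of $(P_t)$ forces $\ker\nabla=\IC\cdot 1$, the trace-zero element $\rho(t)-\sigma$ lies in the range of $\div$; a bounded right inverse (e.g.\ $\mathbf{U}(t)=-\nabla(\sum_j\partial_j^\dagger\partial_j)^{-1}(\sigma-\rho(t))$ on the trace-zero subspace) produces $\mathbf{U}\in L^\infty([0,1];\H_{\A,\J})$ with $\div\mathbf{U}(t)=\rho(t)-\sigma$. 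Then $\bV_s:=(1-s\phi)\bV+s\dot\phi\,\mathbf{U}$ makes $(\rho_s,\bV_s)\in\CE_\epsilon(\rho_0,\rho_1)$.

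The action estimate exploits the monotonicity and positive homogeneity of $X\mapsto[X]_{\Lambda_j}$, both of which follow from the axioms of operator connections. From $\rho_s\geq(1-s\phi)\rho$ and $\rho_s\geq s\phi\sigma$ I deduce $[\rho_s]_{\vec\Lambda}\geq(1-s\phi)[\rho]_{\vec\Lambda}$ and $[\rho_s]_{\vec\Lambda}\geq s\phi[\sigma]_{\vec\Lambda}$, which via the identity $\langle\cdot,\mathcal{K}^{-1}\cdot\rangle=\sup_{\delta>0}\langle\cdot,(\mathcal{K}+\delta)^{-1}\cdot\rangle$ underlying the proof of Lemma~\ref{lem:monotone_conv} become the quadratic-form inequalities
\[
\langle\bV,[\rho_s]_{\vec\Lambda}^{-1}\bV\rangle\leq(1-s\phi)^{-1}\langle\bV,[\rho]_{\vec\Lambda}^{-1}\bV\rangle,\qquad\langle\mathbf{U},[\rho_s]_{\vec\Lambda}^{-1}\mathbf{U}\rangle\leq(s\phi)^{-1}\langle\mathbf{U},[\sigma]_{\vec\Lambda}^{-1}\mathbf{U}\rangle.
\]
Applying the elementary estimate $(a+b)^2\leq(1+\eta)a^2+(1+\eta^{-1})b^2$ to the seminorm $\langle\cdot,[\rho_s]_{\vec\Lambda}^{-1}\cdot\rangle^{1/2}$ with $a=1-s\phi$, $b=s\dot\phi$, and $\eta=\sqrt s$, the action of $(\rho_s,\bV_s)$ is dominated by
\[
(1+\sqrt s)\int_0^1(1-s\phi)\langle\bV,[\rho]_{\vec\Lambda}^{-1}\bV\rangle\,dt+(1+s^{-1/2})s\int_0^1\frac{\dot\phi^2}{\phi}\langle\mathbf{U},[\sigma]_{\vec\Lambda}^{-1}\mathbf{U}\rangle\,dt,
\]
whose limit as $s\to 0$ is $\int_0^1\langle\bV,[\rho]_{\vec\Lambda}^{-1}\bV\rangle\,dt$, since the second integral is finite by the choice of $\phi$ and the $L^\infty$ bound on $\mathbf{U}$.

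The main technical hurdle I foresee is controlling the form $\langle\cdot,[\rho(t)]_{\vec\Lambda}^{-1}\cdot\rangle$ at times $t$ at which $[\rho(t)]_{\vec\Lambda}$ is singular; the $\sup_\delta$ representation used in the proof of Lemma~\ref{lem:monotone_conv} is exactly what allows the operator inequality $[\rho_s]_{\vec\Lambda}\geq(1-s\phi)[\rho]_{\vec\Lambda}$ to be inverted into the corresponding form inequality above. Existence of a uniformly bounded $\mathbf{U}$ solving $\div\mathbf{U}=\rho-\sigma$ is a secondary but essential point at which ergodicity enters.
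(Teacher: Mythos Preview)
Your argument is correct (up to a harmless sign slip: with your explicit $\mathbf{U}$ one gets $\div\mathbf{U}=\sigma-\rho$, so the correction in $\bV_s$ should carry the opposite sign; the estimates are unaffected). It is, however, a genuinely different construction from the paper's. The paper builds the approximant in three pieces: short linear interpolations between the invertible endpoints and $1_\A$ on $[0,\delta]$ and $[1-\delta,1]$, and on $(\delta,1-\delta)$ a time-rescaled copy of $\rho$ shifted by $\delta 1_\A$, with an additional gradient correction $\bW^\delta$ to absorb the mismatch between $\L^\dagger$ of the shifted and original curves; Lemma~\ref{lem:monotone_conv} together with monotone convergence then handles the limit of the middle action. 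Your single-formula perturbation $\rho_s=(1-s\phi)\rho+s\phi\sigma$ trades this piecewise bookkeeping for two structural observations the paper does not invoke here: the invariance $\L^\dagger\sigma=0$, which eliminates the drift contribution of the reference state, and the positive homogeneity of operator connections, which turns the crude lower bounds $\rho_s\geq(1-s\phi)\rho$ and $\rho_s\geq s\phi\sigma$ directly into the form inequalities you need for $[\rho_s]_{\vec\Lambda}^{-1}$. The payoff of your route is brevity and a cleaner estimate (the choice of $\phi$ with $\int_0^1\dot\phi^2/\phi<\infty$ is exactly what makes the correction term vanish at rate $\sqrt s$); the paper's route has the minor advantage that it works verbatim with $1_\A$ in place of $\sigma$ and so does not rely on identifying the invariant state, though both proofs use ergodicity in the same way to invert $\div$ on trace-zero elements.
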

\begin{proof}
It suffices to show that every curve $(\rho,\bV)\in \CE_\epsilon^\prime(\rho_0,\rho_1)$ can be approximated by curves in $\CE_\epsilon(\rho_0,\rho_1)$ such that the action integrals converge.

For that purpose let
\begin{equation*}
\rho^\delta\colon [0,1]\to \S_+(\A),\,t\mapsto\begin{cases}(1-t)\rho_0+t1_\A&\text{if }t\in[0,\delta],\\
(1-\delta)\rho((1-2\delta)^{-1}(t-\delta))+\delta 1_\A&\text{if }t\in(\delta,1-\delta),\\
t\rho_1+(1-t)1_\A&\text{if }t\in[1-\delta,1].
\end{cases}
\end{equation*}
Since $(P_t)$ is assumed to be ergodic, for $t\in[0,\delta]$ there exists $\bV^\delta(t)=\nabla U^\delta(t)$ such that
\begin{equation*}
\dot \rho^\delta(t)+\div\bV^\delta(t)=1-\rho_0+\div \bV^\delta(t)=\epsilon (1-t)\L^\dagger\rho_0=\epsilon \L^\dagger\rho^\delta(t)
\end{equation*}
and
\begin{equation*}
\norm{\bV^\delta(t)}_{\H_{\A,\J}}\leq C\norm{\epsilon(1-t)\L^\dagger\rho_0-1+\rho_0}_{\H_\A}
\end{equation*}
with a constant $C>0$ (depending only on the spectral gap of $\L$). In particular, it is bounded independent of $\delta$.

Moreover, if $\lambda$ is the smallest eigenvalue of $\rho_0$, which is strictly positive by assumption, then $\rho^\delta(t)\geq ((1-t)\lambda+t)1_\A\geq \lambda 1_\A$.

Thus
\begin{align*}
\int_0^\delta \langle \bV^\delta(t),[\rho^\delta(t)]_{\vec\Lambda}^{-1}\bV^\delta(t)\rangle_{\H_{\A,\J}}\,dt&\leq \int_0^\delta \langle \bV^\delta(t),[\lambda 1_\A]_{\vec\Lambda}^{-1}\bV^\delta(t)\rangle_{\H_{\A,\J}}\,dt\\
&\leq \norm{[\lambda 1_\A]_{\vec \Lambda}^{-1}}\int_0^\delta\norm{\bV^\delta(t)}_{\H_{\A,\J}}^2\,dt\\
&\to 0
\end{align*}
as $\delta\to 0$. Similarly one can show
\begin{equation*}
\lim_{\delta\to 0}\int_{1-\delta}^1 \langle \bV^\delta(t),[\rho^\delta(t)]_{\vec\Lambda}^{-1}\bV^\delta(t)\rangle_{\H_{\A,\J}}\,dt=0.
\end{equation*}
By the same argument as above, for a.e. $t\in (\delta,1-\delta)$ there exists a unique gradient $\bW^\delta(t)$ such that
\begin{equation*}
\div \bW^\delta(t)=-\frac{2\delta\epsilon}{1-2\delta}\L^\dagger \rho((1-2\delta)^{-1}(t-\delta))
\end{equation*}
and
\begin{equation*}
\norm{\bW^\delta(t)}_{\H_{\A,\J}}\leq \frac{2\delta\epsilon}{1-2\delta}\norm{\L^\dagger\rho((1-2\delta)^{-1}(t-\delta))}_{\H_{\A,\J}}.
\end{equation*}
Since $\rho\in H^1([0,1];\S(\A))\subset C([0,1];\S(\A))$, the norm on the right side is bounded independent of $\delta$, so that
\begin{equation*}
\norm{\bW^\delta(t)}_{\H_{\A,\J}}\leq \tilde C\delta
\end{equation*}
with a constant $\tilde C>0$ independent of $\delta$. As $\rho^\delta(t)\geq \delta 1_\A$ for $t\in (\delta,1-\delta)$, this implies
\begin{align*}
\int_\delta^{1-\delta}\langle\bW^\delta(t), [\rho^\delta(t)]_{\vec\Lambda}^{-1}\bW^\delta(t)\rangle_{\H_{\A,\J}}\,dt&\leq \frac 1 \delta\int_\delta^{1-\delta}\langle \bW^\delta(t),[1_\A]_{\vec\Lambda}^{-1}\bW^\delta(t)\rangle_{\H_{\A,\J}}\,dt\\
&\leq \tilde C \norm{[1_\A]_{\vec\Lambda}^{-1}}\delta\\
&\to 0
\end{align*}
as $\delta\to 0$.

With 
\begin{equation*}
\bV^\delta(t)=\frac 1{1-2\delta}\bV((1-2\delta)^{-1}(t-\delta))+\bW^\delta(t)
\end{equation*}
we have
\begin{equation*}
\dot \rho^\delta(t)+\div \bV^\delta(t)=\epsilon \L\rho^\delta(t).
\end{equation*}
Furthermore,
\begin{align*}
&\quad\int_\delta^{1-\delta}\left\langle\bV\left(\frac{t-\delta}{1-2\delta}\right),[\rho^\delta(t)]_{\vec\Lambda}^{-1}\bV\left(\frac{t-\delta}{1-2\delta}\right)\right\rangle_{\H_{\A,\J}}\,dt\\
&=\frac 1{1-\delta}\int_\delta^{1-\delta}\left\langle\bV\left(\frac{t-\delta}{1-2\delta}\right),\left[\rho\left(\frac{t-\delta}{1-2\delta}\right)+\frac{\delta}{1-\delta}\right]_{\vec\Lambda}^{-1}\bV\left(\frac{t-\delta}{1-2\delta}\right)\right\rangle_{\H_{\A,\J}}\,dt\\
&=\frac{1-2\delta}{1-\delta}\int_0^1 \langle \bV(s),\left[\rho(s)+\frac{\delta}{1-\delta}\right]_{\vec\Lambda}^{-1}\bV(s)\rangle_{\H_{\A,\J}}\,ds,
\end{align*}
where we used the substitution $s=(1-2\delta)^{-1}(t-\delta)$ in the last step.

By Lemma \ref{lem:monotone_conv} and the monotone convergence theorem we obtain
\begin{align*}
&\quad\;\lim_{\delta\to 0}\int_\delta^{1-\delta}\left\langle\bV\left(\frac{t-\delta}{1-2\delta}\right),[\rho^\delta(t)]_{\vec\Lambda}^{-1}\bV\left(\frac{t-\delta}{1-2\delta}\right)\right\rangle_{\H_{\A,\J}}\,dt\\
&=\int_0^1 \langle \bV(s),[\rho(s)]_{\vec\Lambda}^{-1}\bV(s)\rangle_{\H_{\A,\J}}\,ds.
\end{align*}

Together with the convergence result for $\bW^\delta$ from above, this implies
\begin{equation*}
\int_\delta^{1-\delta}\langle \bV^\delta(t),[\rho^\delta(t)]_{\vec \Lambda}^{-1}\bV^\delta(t)\rangle_{\H_{\A,\J}}\,dt\to \int_0^1\langle \bV(t),[\rho(t)]_{\vec\Lambda}^{-1}\bV(t)\rangle_{\H_{\A,\J}}\,dt.
\end{equation*}
Altogether we have shown
\begin{equation*}
\lim_{\delta\to 0}\int_0^{1}\langle \bV^\delta(t),[\rho^\delta(t)]_{\vec \Lambda}^{-1}\bV^\delta(t)\rangle_{\H_{\A,\J}}\,dt= \int_0^1\langle \bV(t),[\rho(t)]_{\vec\Lambda}^{-1}\bV(t)\rangle_{\H_{\A,\J}}\,dt.\qedhere
\end{equation*}
\end{proof}

\section{Real subspaces}

Since the proof of the main result relies on convex analysis methods for real Banach spaces, we need to identify suitable real subspaces for our purposes. For $\A$ this is simply $\A_h$, but for $\H_{\A,\J}$ this is less obvious and will be done in the following.

For $j\in\J$ denote by $j^\ast$ the unique index in $\J$ such that $V_j^\ast=V_{j^\ast}$. Let $\tilde \H_ {\A}^{(j)}$ be the linear span of $\{X\partial_j A\mid A,X\in\A\}$, and define a linear map $J\colon \tilde \H_{\A}^{(j)}\to \tilde \H_{\A}^{(j^\ast)}$ by 
\begin{align*}
J(X\partial_j A)=\partial_{j^\ast} (A^\ast)X^\ast.
\end{align*}
By the product rule, $(\partial_j A)X$ also belongs to $\tilde \H_{\A,\J}^{(j)}$ and $J((\partial_j A)X)=X^\ast \partial_{j^\ast}(A^\ast)$.

\begin{lemma}
The map $J$ is anti-unitary.
\end{lemma}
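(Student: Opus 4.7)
The linchpin of the argument is the algebraic identity $(\partial_j A)^\ast = -\partial_{j^\ast}(A^\ast)$, which is immediate from $V_j^\ast = V_{j^\ast}$ combined with the commutator definition $\partial_j = [V_j,\cdot\,]$. I will use this repeatedly.

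My plan is to first establish, by direct computation on basic spanning elements, the anti-isometry inner product identity
\begin{equation*}
\langle J(X\partial_j A), J(Y\partial_j B)\rangle_{\H_\A} = \overline{\langle X\partial_j A, Y\partial_j B\rangle_{\H_\A}}.
\end{equation*}
Expanding both sides via the GNS formula $\langle P,Q\rangle_{\H_\A} = \tau(P^\ast Q)$, then applying the identity above on the left, and on the right using $\overline{\tau(R)} = \tau(R^\ast)$ together with cyclicity of the trace, both sides should reduce to $-\tau(\partial_{j^\ast}(B^\ast) Y^\ast X \partial_j A)$.

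Once this identity is in hand, well-definedness of $J$ will follow as a corollary rather than by direct verification against all relations among expressions of the form $X\partial_j A$: if two formal sums $\xi_1$ and $\xi_2$ represent the same element of $\H_\A$, then $\norm{\xi_1-\xi_2}_{\H_\A}^2 = 0$, and the identity extended by anti-linearity forces $\norm{J\xi_1 - J\xi_2}_{\H_\A}^2 = 0$, so the two candidate images agree. The same identity then makes $J$ an anti-linear isometry. For surjectivity I would invoke the product-rule remark immediately preceding the lemma, which says $J((\partial_j A)X) = X^\ast \partial_{j^\ast}(A^\ast)$: given any generator $Y\partial_{j^\ast} B$ of $\tilde\H_\A^{(j^\ast)}$, the element $(\partial_j B^\ast) Y^\ast \in \tilde\H_\A^{(j)}$ maps to $Y\partial_{j^\ast}B$ under $J$.

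The main obstacle is precisely the well-definedness step, since elements of $\tilde\H_\A^{(j)}$ admit many representations as sums $\sum_k X_k \partial_j A_k$ (the Leibniz rule $\partial_j(XA) = (\partial_j X)A + X\partial_j A$ generates an infinite family of such relations). A case-by-case check of these relations would be awkward, whereas extracting well-definedness from the isometry identity, once established, is essentially a one-line observation.
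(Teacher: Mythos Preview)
Your approach is essentially the same as the paper's: both establish the anti-isometry identity $\langle J(X\partial_j A),J(Y\partial_j B)\rangle_{\H_\A}=\langle Y\partial_j B,X\partial_j A\rangle_{\H_\A}$ by a direct trace computation on generators (the paper simply expands the commutators, while you package the same calculation via $(\partial_j A)^\ast=-\partial_{j^\ast}(A^\ast)$). Your treatment is in fact more complete, since you explicitly address well-definedness and surjectivity, whereas the paper's proof records only the inner-product identity and leaves these points implicit.
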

\begin{proof}
For $A,B,X,Y\in\A$ we have
\begin{align*}
\langle J(X\partial_j A),J(Y\partial_j B)\rangle_{\H_\A}&=\tau(X(AV_j-V_A)(V_j^\ast B^\ast-B^\ast V_j^\ast) Y^\ast)\\
&=\tau((B^\ast V_j^\ast-V_j^\ast B^\ast)Y^\ast X(V_j A-AV_j))\\
&=\langle Y\partial_j B,X\partial_j A\rangle.\qedhere
\end{align*}
\end{proof}

Let 
\begin{equation*}
\Hh_{\A,\J}=\{\bV\in \bigoplus_{j\in\J}\tilde\H_{\A}^{(j)}\mid J(\bV_j)=\bV_{j^\ast}\}.
\end{equation*}
By the previous lemma, $\Hh_{\J,\A}$ is a real Hilbert space.

\begin{lemma}\label{lem:grad_real}
Let $(\Lambda_j)_{j\in\J}$ be a family of operator connections such that $\Lambda_{j^\ast}(A,B)=\Lambda_j(B,A)$ for all $j\in\J$. If $A\in \A_h$ and $\rho\in \S(\A)$, then $\nabla A,[\rho]_{\vec\Lambda}\nabla A\in \Hh_{\A,\J}$.
\end{lemma}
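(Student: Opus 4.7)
The plan is to verify the defining property $J(\bV_j) = \bV_{j^\ast}$ separately for the two vectors. For the first, observe that by definition $J(1_\A \cdot \partial_j A) = \partial_{j^\ast}(A^\ast) \cdot 1_\A^\ast = \partial_{j^\ast} A$, where the final equality uses $A = A^\ast$. Hence $\nabla A = (\partial_j A)_{j\in\J}$ lies in $\Hh_{\A,\J}$.

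For $[\rho]_{\vec\Lambda}\nabla A$, the main work is to rewrite each component using the spectral representation \eqref{eq:spec_rep} and then apply $J$ termwise. First I would extend the action of $J$ on ``two-sided'' products of the form $X(\partial_j A)Y$ with $X,Y\in\A$. Writing $X(\partial_j A)Y = X\partial_j(AY) - XA\,\partial_j Y$, the definition of $J$ together with the Leibniz rule gives
\begin{equation*}
J\bigl(X(\partial_j A)Y\bigr) = \partial_{j^\ast}(Y^\ast A^\ast)X^\ast - \partial_{j^\ast}(Y^\ast) A^\ast X^\ast = Y^\ast \partial_{j^\ast}(A^\ast)\,X^\ast.
\end{equation*}
In particular $X(\partial_j A)Y$ belongs to $\tilde\H_\A^{(j)}$.

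Now since $\rho\in\S(\A)\subset\A$ and $\A$ is a unital $\ast$-subalgebra of the finite-dimensional algebra $M_n(\IC)$, the spectral projections $E_k$ of $\rho$ lie in $\A$. Writing $\rho = \sum_k \lambda_k E_k$ and using \eqref{eq:spec_rep},
\begin{equation*}
[\rho]_{\Lambda_j}(\partial_j A) = \sum_{k,l} \Lambda_j(\lambda_k,\lambda_l)\,E_k (\partial_j A) E_l,
\end{equation*}
which lies in $\tilde\H_\A^{(j)}$ by the previous step. Applying $J$ termwise (using $E_k^\ast = E_k$ and $A^\ast = A$) gives
\begin{equation*}
J\bigl([\rho]_{\Lambda_j}(\partial_j A)\bigr) = \sum_{k,l} \Lambda_j(\lambda_k,\lambda_l)\,E_l (\partial_{j^\ast} A) E_k = \sum_{k,l} \Lambda_j(\lambda_l,\lambda_k)\,E_k (\partial_{j^\ast} A) E_l,
\end{equation*}
and the symmetry assumption $\Lambda_{j^\ast}(a,b) = \Lambda_j(b,a)$ identifies this with $[\rho]_{\Lambda_{j^\ast}}(\partial_{j^\ast} A)$, which is the $j^\ast$ component of $[\rho]_{\vec\Lambda}\nabla A$.

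The only subtle points are the extension of $J$ to two-sided products via the Leibniz rule (a short but crucial calculation) and the observation that spectral projections of $\rho$ actually sit inside $\A$, which makes the spectral expansion compatible with the definition of $\tilde\H_\A^{(j)}$; everything else is bookkeeping.
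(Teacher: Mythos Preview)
Your argument is correct and is essentially the paper's proof unpacked: the paper compresses your explicit spectral computation into the single intertwining identity $J\,\Lambda(L(\rho),R(\rho))=\Lambda(R(\rho),L(\rho))\,J$, derived from \eqref{eq:spec_rep}, and then applies it to $\partial_j A$. One minor remark: since $J$ is anti-unitary rather than $\IC$-linear, your ``termwise'' application of $J$ tacitly uses that the scalars $\Lambda_j(\lambda_k,\lambda_l)$ are real, which is clear because operator connections take nonnegative values on nonnegative reals.
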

\begin{proof}
For $\nabla A$ the statement follows directly from the definitions. For $[\rho]_{\vec\Lambda} \nabla A$ first note that 
\begin{equation*}
J\Lambda(L(\rho),R(\rho))=\Lambda(R(\rho),L(\rho))J
\end{equation*}
as a consequence of the spectral representation (\ref{eq:spec_rep}).

Thus
\begin{align*}
J([\rho]_{\Lambda_j}\partial_j A)&=J\Lambda_j(L(\rho),R(\rho))\partial_j A\\
&=\Lambda_j(R(\rho),L(\rho))J\partial_j A\\
&=\Lambda_{j^\ast}(L(\rho),R(\rho))\partial_{j^\ast}A.\qedhere
\end{align*}
\end{proof}

\section{Duality}

In this section we prove the duality theorem announced in the introduction. Our strategy follows the same lines as the proof in the commutative case in \cite{EMW19}. It crucially relies on the Rockefellar--Fenchel duality theorem quoted below. Throughout this section we fix a quantum Markov semigroup with generator $\L$ satisfying the $\sigma$-DBC for some $\sigma\in \S_+(\A)$ and a family $(\Lambda_j)_{j\in\J}$ of operator connections such that $\Lambda_{j^\ast}(A,B)=\Lambda_j(B,A)$ for all $j\in \J$.

We need the following definition for the constraint of the dual problem. Here and in the following we write
\begin{equation*}
\langle \bV,\bW\rangle_\rho=\langle \bV,[\rho]_{\vec\Lambda}\bW\rangle_{\H_{\A,\J}}
\end{equation*}
for $\bV,\bW\in \H_{\A,\J}$ and $\rho\in \A_+$.

\begin{definition}\label{def:HJB}
A function $A\in H^1((0,T);\A_h)$ is said to be a \emph{Hamilton--Jacobi--Bellmann subsolution} if for a.e. $t\in (0,T)$ we have
\begin{align*}
\tau((\dot A(t)+\epsilon \L A(t)) \rho)+\frac 1 2 \norm{\nabla A(t)}_{\rho}^2\leq 0\qquad\text{for all }\rho\in \S(A).
\end{align*}
The set of all Hamilton--Jacobi--Bellmann subsolutions is denoted by $\HJB_{\vec\Lambda,\epsilon}$.
\end{definition}

Our proof will establish equality between the primal and dual problem, but before we begin, let us show that one inequality is actually quite easy to obtain.

\begin{proposition}
For all $\rho_0,\rho_1\in\S_+(\A)$ we have
\begin{equation*}
\sup_{A\in\HJB_{\vec\Lambda,\epsilon}}\tau(A(1)\rho_1-A(1)\rho_0)\leq\frac 1 2 \inf_{(\rho,\bV)\in\CE_\epsilon(\rho_0,\rho_1)}\int_0^1 \langle \bV(t),[\rho(t)]_{\vec\Lambda}^{-1}\bV(t)\rangle_{\H_{\A,\J}}\,dt.
\end{equation*}
\end{proposition}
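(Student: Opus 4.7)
The plan is to estimate the functional $t\mapsto \tau(A(t)\rho(t))$ along an arbitrary HJB subsolution $A$ and an arbitrary admissible curve $(\rho,\bV)\in\CE_\epsilon(\rho_0,\rho_1)$, then integrate from $0$ to $1$. Fix such an $A$ and $(\rho,\bV)$. Because $A\in H^1((0,1);\A_h)$ and $\rho\in H^1([0,1];\S_+(\A))$, the product $t\mapsto \tau(A(t)\rho(t))$ is absolutely continuous and its derivative is given almost everywhere by the product rule
\begin{equation*}
\tfrac{d}{dt}\tau(A(t)\rho(t))=\tau(\dot A(t)\rho(t))+\tau(A(t)\dot\rho(t)).
\end{equation*}

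Next I would plug in the continuity equation $\dot\rho(t)=-\div\bV(t)+\epsilon\L^\dagger\rho(t)$ and rewrite the two new terms using adjointness in $\H_\A$. For $A(t)\in\A_h$ we have $\tau(A(t)X)=\langle A(t),X\rangle_{\H_\A}$, so
\begin{equation*}
-\tau(A(t)\div\bV(t))=\langle A(t),\nabla^\dagger\bV(t)\rangle_{\H_\A}=\langle\nabla A(t),\bV(t)\rangle_{\H_{\A,\J}},
\end{equation*}
and by definition of $\L^\dagger$ together with the fact that $\L$ preserves self-adjointness,
\begin{equation*}
\epsilon\tau(A(t)\L^\dagger\rho(t))=\epsilon\tau(\L A(t)\,\rho(t)).
\end{equation*}
Combining these identities gives
\begin{equation*}
\tfrac{d}{dt}\tau(A(t)\rho(t))=\tau((\dot A(t)+\epsilon\L A(t))\rho(t))+\langle\nabla A(t),\bV(t)\rangle_{\H_{\A,\J}}.
\end{equation*}

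Now the HJB subsolution property applied to $\rho(t)\in\S(\A)$ bounds the first summand by $-\tfrac 1 2\norm{\nabla A(t)}_{\rho(t)}^2$, and a weighted Cauchy--Schwarz/Young inequality with respect to the positive operator $[\rho(t)]_{\vec\Lambda}$ yields
\begin{equation*}
\langle\nabla A(t),\bV(t)\rangle_{\H_{\A,\J}}\leq \tfrac 1 2 \langle[\rho(t)]_{\vec\Lambda}\nabla A(t),\nabla A(t)\rangle_{\H_{\A,\J}}+\tfrac 1 2\langle\bV(t),[\rho(t)]_{\vec\Lambda}^{-1}\bV(t)\rangle_{\H_{\A,\J}}.
\end{equation*}
The two $\tfrac 1 2\norm{\nabla A(t)}_{\rho(t)}^2$ terms cancel, leaving $\tfrac{d}{dt}\tau(A(t)\rho(t))\leq\tfrac 1 2\langle\bV(t),[\rho(t)]_{\vec\Lambda}^{-1}\bV(t)\rangle_{\H_{\A,\J}}$. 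Integrating from $0$ to $1$ and taking the supremum over $A$ and the infimum over $(\rho,\bV)$ gives the inequality.

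I do not expect a serious obstacle in this direction: the admissibility $(\rho,\bV)\in\CE_\epsilon(\rho_0,\rho_1)$ ensures $\rho(t)\in\S_+(\A)$ so that $[\rho(t)]_{\vec\Lambda}$ is invertible and the Cauchy--Schwarz step is literal. The only mild point requiring care is justifying the product rule in the Sobolev setting, which follows because $H^1$ in time embeds into absolutely continuous functions and both factors are bounded. The harder inequality, which presumably requires Rockafellar--Fenchel duality as announced in the introduction, will be dealt with separately.
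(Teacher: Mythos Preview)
Your argument is correct and follows essentially the same route as the paper: differentiate $t\mapsto\tau(A(t)\rho(t))$, plug in the continuity equation and the HJB subsolution inequality, and then apply the weighted Young/Cauchy--Schwarz inequality with respect to $[\rho(t)]_{\vec\Lambda}$ so that the quadratic gradient terms cancel. The paper's proof is slightly terser, but the logical steps coincide.
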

\begin{proof}
For $A\in \HJB_{\vec\Lambda,\epsilon}$ and $(\rho,\bV)\in \CE_\epsilon(\rho_0,\rho_1)$ we have
\begin{align*}
\tau(A(1)\rho_1-A(0)\rho_0)&=\int_0^1 \tau(\dot A(t) \rho(t)+A(t)\dot\rho(t))\,dt\\
&\leq-\int_0^1 (\epsilon\tau((\L A(t))\rho(t))+\frac 1 2\norm{\nabla A(t)}_{\rho(t)}^2)\,dt\\
&\quad+\int_0^1(\langle \nabla A(t),\bV(t)\rangle_{\H_{\A,\J}}+\epsilon\tau(A(t)\L^\dagger\rho(t)))\,dt\\
&= \int_0^1\langle  [\rho(t)]_{\vec\omega}^{1/2}\nabla A(t),[\rho(t)]_{\vec\Lambda}^{-1/2}\bV(t)\rangle_{\H_{\A,\J}}\,dt\\
&\quad-\frac 1 2 \int_0^1\langle [\rho(t)]_{\vec\Lambda}^{1/2}\nabla A(t),[\rho(t)]_{\vec \Lambda}^{1/2}\nabla A(t)\rangle_{\H_{\A,\J}}\,dt\\
&\leq\frac 1 2\int_0^1\langle\bV(t),[\rho(t)]_{\vec\Lambda}^{-1}\bV(t)\rangle_{\H_{\A,\J}}\,dt,
\end{align*}
where we used $A\in \HJB_{\vec\Lambda,\epsilon}$ and $(\rho,\bV)\in \CE_\epsilon(\rho_0,\rho_1)$ for the first inequality and Young's inequality for the second inequality.
\end{proof}

To prove actual equality, our crucial tool is the Rockefellar--Fenchel duality theorem (see e.g. \cite[Theorem 1.9]{Vil03}, which we quote here for the convenience of the reader. Recall that if $E$ is a (real) normed space, the Legendre--Fenchel transform $F^\ast$ of a proper convex function $F\colon E\to \IR\cup\{\infty\}$ is defined by
\begin{equation*}
F^\ast\colon E^\ast\to\mathbb{R}\cup\{\infty\},\,F^\ast(x^\ast)=\sup_{x\in E}(\langle x^\ast,x\rangle-F(x)).
\end{equation*}

\begin{theorem}\label{thm:Rockefellar-Fenchel}
Let $E$ be a real normed space and $F,G\colon E\to \IR\cup\{\infty\}$ proper convex functions with Legendre--Fenchel transforms $F^\ast,G^\ast$. If there exists $z_0\in E$ such $G$ is continuous at $z_0$ and $F(z_0),G(z_0)<\infty$, then 
\begin{equation*}
\sup_{z\in E}(-F(z)-G(z))=\min_{z^\ast\in E^\ast}(F^\ast(z^\ast)+G^\ast(-z^\ast)).
\end{equation*}
\end{theorem}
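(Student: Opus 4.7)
This is the classical Fenchel--Rockafellar duality theorem, so the plan is to reproduce its standard proof via a Hahn--Banach separation argument in the product space $E \times \IR$.

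First I would dispatch the easy inequality $\sup_z(-F(z)-G(z)) \leq \inf_{z^*}(F^*(z^*)+G^*(-z^*))$ via the Fenchel--Young inequality applied twice: $-F(z) \leq F^*(z^*) - \langle z^*,z\rangle$ and $-G(z) \leq G^*(-z^*) + \langle z^*, z\rangle$. Summing gives the desired bound for all $z \in E$ and $z^* \in E^*$, since the $\langle z^*,z\rangle$ terms cancel.

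For the reverse inequality and attainment, I would set $\alpha = \sup_z(-F(z)-G(z))$. The hypothesis $F(z_0), G(z_0) < \infty$ bounds $\alpha$ from below by $-F(z_0)-G(z_0) > -\infty$, and if $\alpha = +\infty$ the conclusion is trivial, so I may assume $\alpha \in \IR$. In $E \times \IR$ I would form the two convex sets
\begin{equation*}
C = \{(z, r) : r > G(z)\}, \qquad D = \{(z, r) : F(z) + r \leq -\alpha\}.
\end{equation*}
These are disjoint by the definition of $\alpha$, and the continuity of $G$ at $z_0$ gives $C$ a nonempty interior containing, e.g., $(z_0, G(z_0)+1)$. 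The geometric Hahn--Banach theorem then yields a nonzero $(z^*, a) \in E^* \times \IR$ and $c \in \IR$ with $\langle z^*,z\rangle + a r \geq c$ on $C$ and $\leq c$ on $D$. Sending $r \to +\infty$ in $C$ forces $a \geq 0$, and continuity of $G$ at $z_0$ rules out $a = 0$ (otherwise $\langle z^*, \cdot\rangle \geq c$ on a neighborhood of $z_0$ would force $z^* = 0$, contradicting non-triviality). After rescaling $a = 1$, letting $r \searrow G(z)$ in $C$ yields $G^*(-z^*) \leq -c$, while plugging $r = -\alpha - F(z)$ into $D$ yields $F^*(z^*) \leq c + \alpha$. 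Summing produces $F^*(z^*) + G^*(-z^*) \leq \alpha$, hence equality with the infimum attained at this $z^*$.

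The main obstacle in the argument is ruling out the degenerate case $a = 0$ in the separating functional; this is exactly where the continuity assumption on $G$ at $z_0$ is used in an essential way, both to supply $C$ with the interior point required by the geometric form of Hahn--Banach, and to prevent the separating hyperplane from being purely ``horizontal'' (which would give no information on the Legendre transforms).
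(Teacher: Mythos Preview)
Your argument is the standard Hahn--Banach separation proof of Fenchel--Rockafellar duality and is correct in outline; the only place where you are a bit terse is in ruling out $a=0$, where one should also use that $F(z_0)<\infty$ so that $(z_0,-\alpha-F(z_0))\in D$, giving $\langle z^\ast,z_0\rangle\leq c$ and hence $\langle z^\ast,z_0\rangle=c$, from which $\langle z^\ast,v\rangle\geq 0$ for all small $v$ and thus $z^\ast=0$ follows.

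There is nothing to compare against, however: the paper does not prove this theorem. It is quoted verbatim as a known result with a reference to \cite[Theorem 1.9]{Vil03}, and is used as a black box in the proof of Theorem~\ref{thm:duality}. So your proposal supplies a proof where the paper deliberately gives none.
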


Before we state the main result, we still need the following useful inequality.

\begin{lemma}\label{lem:deriv_conv_fct}
For any operator connection $\Lambda$ the map
\begin{equation*}
f_{\Lambda}\colon \A_{++}\to B(\H_A),\,A\mapsto [A]_\Lambda
\end{equation*}
is smooth and its Fréchet derivative satisfies
\begin{align*}
d f_\Lambda(B) A\geq f_\Lambda(A)
\end{align*}
for $A,B\in \A_{++}$ with equality if $A=B$.
\end{lemma}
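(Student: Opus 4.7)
My plan is to prove the lemma by establishing three structural properties of $f_\Lambda$ on the open cone $\A_{++}$ --- smoothness, positive $1$-homogeneity, and operator-valued concavity --- and then deducing the inequality from Euler's identity combined with the standard tangent-hyperplane estimate for concave functions, interpreted in the L\"owner order.

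The smoothness step I would handle via the Kubo--Ando integral representation, which expresses any operator connection $\Lambda(X,Y)$ as a positive combination of $X$ and $Y$ plus an integral of parallel-sum-type terms $(tX:Y)$, with $X:Y=(X^{-1}+Y^{-1})^{-1}$, against a positive Borel measure. Writing $f_\Lambda(A)=\Lambda(L(A),R(A))$, the maps $L,R\colon\A\to B(\H_\A)$ are linear and their images remain positive invertible for $A\in\A_{++}$, so every term in the representation is a smooth function of $A$; a standard dominated-convergence argument lets one differentiate under the integral sign to deduce smoothness of $f_\Lambda$. Equivalently, since $L(A)$ and $R(A)$ commute and the scalar function $(x,y)\mapsto\Lambda(x,y)$ is smooth on $(0,\infty)^2$, the spectral formula $(\ast)$ realises $f_\Lambda$ as joint smooth functional calculus on $\A_{++}$.

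For the structural properties, applying the transformer inequality $C\Lambda(X,Y)C\leq\Lambda(CXC,CYC)$ with $C=t^{1/2}I$ and then $C=t^{-1/2}I$ yields the positive $1$-homogeneity $\Lambda(tX,tY)=t\Lambda(X,Y)$ for $t>0$, which transfers to $f_\Lambda$ because $L,R$ are linear. Joint concavity of $\Lambda$ is a standard consequence of the transformer inequality via a $2\times 2$ block argument (and is also visible term by term from the integral representation), and it likewise transfers to $f_\Lambda$.

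Combining the three ingredients, differentiating $f_\Lambda(tB)=tf_\Lambda(B)$ at $t=1$ gives Euler's identity $df_\Lambda(B)[B]=f_\Lambda(B)$. Operator-valued concavity provides the L\"owner inequality
\[
f_\Lambda(B+s(A-B))\geq(1-s)f_\Lambda(B)+sf_\Lambda(A),\qquad s\in[0,1];
\]
dividing by $s>0$, letting $s\to 0^+$, and using that the positive cone in $B(\H_\A)$ is closed, one obtains $df_\Lambda(B)[A-B]\geq f_\Lambda(A)-f_\Lambda(B)$. Linearity of $df_\Lambda(B)$ together with Euler's identity then delivers $df_\Lambda(B)[A]\geq f_\Lambda(A)$, with equality at $A=B$ coming once more from Euler. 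The main technical obstacle I expect is the smoothness step, where the Kubo--Ando structure theorem does the heavy lifting; the remaining steps are a clean adaptation of the scalar convex-analytic argument to the operator-ordered setting.
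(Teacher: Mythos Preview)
Your argument is correct and shares the paper's three ingredients --- smoothness via the Kubo--Ando representation, positive $1$-homogeneity, and operator concavity of $f_\Lambda$ --- but the way you combine them differs. You take the direct route: from concavity you get the first-order tangent inequality $df_\Lambda(B)[A-B]\geq f_\Lambda(A)-f_\Lambda(B)$ by differentiating the chord inequality at $s=0^+$, and then Euler's identity $df_\Lambda(B)[B]=f_\Lambda(B)$ immediately yields $df_\Lambda(B)[A]\geq f_\Lambda(A)$. The paper instead passes through the second-order characterisation: it invokes \cite[Proposition~2.2]{Han97} to obtain $d^2 f_\Lambda(X)[Y,Y]\leq 0$ in the L\"owner order, integrates this to the operator-monotonicity statement $(df_\Lambda(A)-df_\Lambda(B))(A-B)\leq 0$, and then uses $0$-homogeneity of $df_\Lambda$ to replace $B$ by $\epsilon B$ and let $\epsilon\searrow 0$, arriving at $df_\Lambda(A)A\leq df_\Lambda(B)A$ before applying Euler. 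Your approach is the more elementary of the two, since it avoids the operator-valued Hessian characterisation of concavity and the $\epsilon\to 0$ detour; the paper's route has the mild advantage of establishing the gradient-monotonicity inequality along the way, though this is not used elsewhere.
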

\begin{proof}
Smoothness of $f_\Lambda$ is a consequence of the representation theorem of operator connections [Theorem 3.4]\cite{KA80}. For the claim about the Fréchet derivative first note that $f_\Lambda$ is concave \cite[Theorem 3.5]{KA80}. Therefore $d^2 f_\Lambda(X)[Y,Y]\leq 0$ for all $X\in \A_{++}$ and $Y\in \A_h$ by \cite[Proposition 2.2]{Han97}.

The fundamental theorem of calculus implies
\begin{align*}
(d f_\Lambda(A)-d f_\Lambda(B))(A-B)&=\int_0^1 d^2 f_\Lambda(tA+(1-t)B)[A-B,A-B]\,dt\\
&\leq 0.
\end{align*}
Since $f_\Lambda$ is $1$-homogeneous, its derivative is $0$-homogeneous. Thus, if we replace $B$ by $\epsilon B$ and let $\epsilon\searrow$, we obtain
\begin{equation*}
d f_\Lambda(A)A\leq d f_\Lambda(B)A.
\end{equation*}
Moreover, the $1$-homogeneity of $f_\Lambda$ implies $d f_\Lambda(A)A=f_\Lambda(A)$, which settles the claim.
\end{proof}

\begin{theorem}[Duality formula]\label{thm:duality}
For $\rho_0,\rho_1\in \S_+(A)$ we have
\begin{align*}
\frac1  2\W_{\vec\Lambda,\epsilon}(\rho_0,\rho_1)^2&=\sup\{\tau(A(1)\rho_1)-\tau(A(0),\rho_0) : A\in \HJB_{\vec\Lambda,\epsilon}\}\\
&=\sup\{\tau(A(1)\rho_1)-\tau(A(0),\rho_0) : A\in \HJB_{\vec\Lambda,\epsilon}\cap C^\infty([0,1];\A)\}.
\end{align*}
\end{theorem}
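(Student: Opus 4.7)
The plan is to apply the Rockafellar--Fenchel duality theorem (Theorem~\ref{thm:Rockefellar-Fenchel}) along the lines of the commutative argument in \cite{EMW19}. The preceding proposition already gives the inequality~$\ge$, so the task is to produce the matching upper bound and, afterwards, to upgrade the supremum to $C^\infty$ subsolutions.

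For the core duality step, I would work on the real Banach space $E = C([0,1];\A_h)$ with the sup-norm; by Lemma~\ref{lem:grad_real}, $\dot A+\epsilon\L A$ and $\nabla A$ land in $\A_h$ and $\Hh_{\A,\J}$ respectively, so the whole argument stays in the real category required by Rockafellar--Fenchel. Introduce
\begin{equation*}
F(A)=\tau(A(0)\rho_0)-\tau(A(1)\rho_1),\qquad G(A)=\int_0^1 K\bigl(\dot A(t)+\epsilon\L A(t),\nabla A(t)\bigr)\,dt,
\end{equation*}
with
\begin{equation*}
K(B,C):=\sup_{\rho\in \A_+}\bigl[\tau(B\rho)+\tfrac12\langle C,[\rho]_{\vec\Lambda}C\rangle\bigr].
\end{equation*}
Because $[\rho]_{\vec\Lambda}$ is $1$-homogeneous in $\rho$, the supremum over the positive cone takes only the values $0$ and $+\infty$, and equals $0$ precisely when the HJB-inequality of Definition~\ref{def:HJB} holds. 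Consequently $G(A)$ equals $0$ if $A\in\HJB_{\vec\Lambda,\epsilon}$ and $+\infty$ otherwise, so $\sup_E(-F-G)$ is exactly the right-hand side of the theorem.

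By Theorem~\ref{thm:Rockefellar-Fenchel}, this supremum equals $\min_{A^\ast\in E^\ast}(F^\ast(A^\ast)+G^\ast(-A^\ast))$. Elements of $E^\ast$ are $\A_h$-valued Borel measures on $[0,1]$, which I would identify with curves $\rho(\cdot)$ (possibly non-smooth). Integration by parts against $F$ forces $\rho(0)=\rho_0$, $\rho(1)=\rho_1$ and a continuity-equation coupling to a momentum field~$\bV$, while $G^\ast$ reduces to the pointwise Fenchel conjugate of $K$. The central calculation is
\begin{equation*}
K^\ast(\rho,\bV)=\tfrac12\langle\bV,[\rho]_{\vec\Lambda}^{-1}\bV\rangle\qquad\text{for }\rho\in\A_+,
\end{equation*}
which I would prove via Lemma~\ref{lem:deriv_conv_fct}: choosing $C=[\rho]_{\vec\Lambda}^{-1}\bV$ and then $B$ so that $\tau(B\,\cdot)$ cancels the first-order perturbation of $\frac12\langle C,[\rho']C\rangle$ at $\rho'=\rho$ (through the derivative $df_{\vec\Lambda}(\rho)$) produces the correct lower bound and matches the obvious upper bound from exchanging sup and inf. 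Proposition~\ref{prop:relaxed_paths_CE} then identifies this minimum with $\frac12\W_{\vec\Lambda,\epsilon}(\rho_0,\rho_1)^2$ by allowing non-invertible $\rho(t)$.

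The principal obstacle is the qualification hypothesis of Theorem~\ref{thm:Rockefellar-Fenchel}: some point must simultaneously be a continuity point of one functional and a finiteness point of both. A convenient candidate is the constant curve $A\equiv c\cdot 1_\A$ for any $c\in\IR$, where $\dot A+\epsilon\L A=0$, $\nabla A=0$, $K(0,0)=0$ and hence $G(A)=0$; continuity of $G$ near this point follows from uniform bounds for $K$ on neighbourhoods of $(0,0)$ using compactness of $\S(\A)$. For the $C^\infty$ upgrade, an arbitrary $A\in\HJB_{\vec\Lambda,\epsilon}$ is time-mollified: $\L$ and $\nabla$ commute with time-mollification and the integrand of $K$ is convex in $(B,C)$ for each fixed $\rho$, so Jensen's inequality keeps the mollified curve inside $\HJB_{\vec\Lambda,\epsilon}$, with the boundary values recovered after a short affine extension beyond $t=0$ and $t=1$. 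The most delicate part throughout is the identification of $K^\ast$, where the noncommutative operator-connection structure replaces the pointwise duality available in the commutative setting and genuinely requires Lemma~\ref{lem:deriv_conv_fct}.
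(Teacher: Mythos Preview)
There are two genuine gaps in your plan.

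\textbf{The qualification point is not interior.} You take $A\equiv c\cdot 1_\A$, for which $(\dot A+\epsilon\L A,\nabla A)=(0,0)$. But $K$ is the indicator of the closed convex set $\{(B,C):\tau(B\rho)+\tfrac12\langle C,[\rho]_{\vec\Lambda}C\rangle\le 0\text{ for all }\rho\in\S(\A)\}$, and $(0,0)$ lies on its \emph{boundary}: for any $\delta>0$ the pair $(\delta 1_\A,0)$ violates the inequality, so $K(\delta 1_\A,0)=+\infty$. Hence $G$ is not continuous at the constant curve in any topology in which $\dot A+\epsilon\L A$ can be perturbed, and the hypothesis of Theorem~\ref{thm:Rockefellar-Fenchel} fails. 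The paper instead chooses $A_0(t)=-t\,1_\A$, so that $(\dot A_0+\epsilon\L A_0,\nabla A_0)=(-1_\A,0)$ sits strictly inside the constraint set with margin~$-1$; this margin is what makes $G$ locally constant near the chosen point. (Incidentally, on $E=C([0,1];\A_h)$ the expression $\dot A$ is not even defined; you need at least $C^1$ or $H^1$.)

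\textbf{The splitting makes the duality circular.} With $F$ purely linear and $G$ the indicator of $\HJB_{\vec\Lambda,\epsilon}$, the conjugate $F^\ast$ is the indicator of the single functional $\mu_F\colon A\mapsto \tau(A(0)\rho_0)-\tau(A(1)\rho_1)$, so the Rockafellar--Fenchel minimum collapses to $G^\ast(-\mu_F)=\sup_{A\in\HJB_{\vec\Lambda,\epsilon}}(\tau(A(1)\rho_1)-\tau(A(0)\rho_0))$, i.e.\ exactly the quantity you started with. No continuity equation emerges, and $G^\ast$ does \emph{not} reduce to an integral of $K^\ast$: the integrand $K$ is evaluated at $(\dot A+\epsilon\L A,\nabla A)$, which are constrained functions of $A$ rather than free variables, so the pointwise Legendre transform cannot be taken directly. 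The paper resolves this by enlarging the primal space to pairs $(A,\bV)\in H^1([0,1];\A_h)\times L^2([0,1];\Hh_{\A,\J})$, placing the constraint $\bV=\nabla A$ inside $F$, and identifying the dual via the ODE isomorphism $A\mapsto(A(0),\dot A+\epsilon\L A)$. Then $F^\ast$ produces the continuity equation and the endpoint conditions, while $G$ becomes an integral functional of the genuinely free variables $(\dot A+\epsilon\L A,\bV)$, so that $G^\ast$ does factor through the pointwise conjugate and Lemma~\ref{lem:deriv_conv_fct} can be applied as you outlined.
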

\begin{proof}
The second inequality follows easily by mollifying. We will show the duality formula for Hamilton--Jacobi subsolutions in $H^1$. For this purpose we use the Rockefellar--Fenchel duality formula from Theorem \ref{thm:Rockefellar-Fenchel}.

Let $E$ be the real Banach space 
\begin{equation*}
H^1([0,1];\Hh_\A)\times L^2([0,1];\Hh_{\A,\J}).
\end {equation*}
By the theory of linear ordinary differential equations, the map
\begin{equation*}
H^1([0,1];\Hh_\A)\to\Hh_\A\times L^2([0,1];\Hh_\A),\,A\mapsto (A(0),\dot A+\epsilon \L A)
\end{equation*}
is a linear isomorphism.

Thus the dual space $E^\ast$ can be isomorphically identified with
\begin{align*}
\Hh_\A\times L^2([0,1];\Hh_\A)\times L^2([0,1];\Hh_{\A,\J})
\end{align*}
via the dual pairing
\begin{align*}
\langle (A,\bV),(B,C,\bW)\rangle&=\tau(A(0)B)+\int_0^1 \tau((\dot A(t)+\epsilon \L A(t))C(t))\,dt\\
&\quad+\int_0^1 \langle\bV(t),\bW(t)\rangle_{\H_{\A,\J}}\,dt.
\end{align*}
Define functionals $F,G\colon E\lra \IR\cup\{\infty\}$ by
\begin{align*}
F(A,\bV)&=\begin{cases}-\tau(A(1)\rho_1)+\tau(A(0)\rho_0)&\text{if }\bV=\nabla A,\\
\infty&\text{otherwise},
\end{cases}\\
G(A,\bV)&=\begin{cases}
0&\text{if }(A,\bV)\in \D,\\
\infty&\text{otherwise}.
\end{cases}
\end{align*}
Here 
\begin{equation*}
\D=\{(A,\bV): \tau((\dot A(t)+\epsilon \L A(t))\rho)+\frac 1 2\norm{\bV(t)}_{\rho}^2\leq 0\text{ for all }t\in [0,1],\rho\in \S(\A)\}.
\end{equation*}
It is easy to see that $F$ and $G$ are convex. Moreover, for $A_0(t)=-t 1_\A$ and $\bV_0=0$ we have $\bV_0=\nabla A_0$, hence $F(A_0,\bV_0)=0$, and
\begin{equation*}
\tau((\dot A_0(t)+\epsilon \L A_0(t)) \rho)+\frac 1 2\norm{\bV_0(t)}_{\rho}^2=-1
\end{equation*}
for all $t\in[0,1],\;\rho\in\S(A)$, hence $G(A_0,\bV_0)=0$. Furthermore, $G$ is clearly continuous at $(A_0,\bV_0)$.

Moreover,
\begin{equation*}
\sup_{(A,\bV)\in E}(-F(A,\bV)-G(A,\bV))=\sup_{A\in\HJB_{\vec\Lambda,\epsilon}(\rho_0,\rho_1)}(\tau(A(1)\rho_1)-\tau(A(0)\rho_0)).
\end{equation*}

Let us calculate the Legendre transforms of $F$ and $G$, keeping in mind the identification of $E^\ast$. For $F$ we obtain
\begin{align*}
F^\ast(B,C,\bW)&=\sup_{(A,\bV)\in E}\bigg\{\langle (A,\bV),(B,C,\bW)\rangle-F(A,\bV)\bigg\}\\
&=\sup_{A}\bigg\{\tau(A(0)B)+\int_0^1 \tau((\dot A(t)+\epsilon \L A(t))\,C(t))\,dt\\
&\quad+\int_0^1 \langle\nabla A(t),\bW(t)\rangle_{\H_{A,\J}}\,dt+\tau(A(1)\rho_1)-\tau(A(0)\rho_0)\bigg\}.
\end{align*}
Since the last expression is homogeneous in $A$, we have $F^\ast(B,C,\bW)=\infty$ unless
\begin{align*}
-\tau(A(1)\rho_1)+\tau(A(0)(\rho_0-B))&=\int_0^1 \tau((\dot A(t)+\epsilon \L A(t))\,C(t))\,dt\\
&\quad+\int_0^1 \langle \nabla A(t),\bW(t)\rangle_{\H_{\A,\J}}\,dt
\end{align*}
for $A\in H^1([0,1];\Hh_\A)$.

This implies $C_0=-(\rho_0-B)$ and $C_1=-\rho_1$ and 
\begin{equation*}
\dot C(t)+\div \bW(t)=\epsilon \L^\dagger C(t).
\end{equation*}
Thus
\begin{equation*}
F^\ast(B,C,\bW)=\begin{cases}0&\text{if }(-C,-\bW)\in \CE_\epsilon^{\prime\prime}(\rho_0-B,\rho_1),\\
\infty&\text{otherwise}.
\end{cases}
\end{equation*}
Here $\CE_\epsilon^{\prime\prime}(\rho_0-B,\rho_1)$ denotes the set of all pairs $(X,\mathbf{U})\in H^1((0,1);\Hh_\A)\times L^2((0,1);\Hh_{\A,\J})$ satisfying $X(0)=\rho_0-B$, $X(1)=\rho_1$ and
\begin{equation*}
\dot X(t)+\div \bm{U}(t)=\epsilon \L^\dagger X(t).
\end{equation*}

The difference to the definitions of $\CE$ (or $\CE^\prime$) and $\CE^{\prime\prime}$ is that we do not make any positivity or normalization constraints. Note however that if $(X,\bm U)\in \CE^{\prime\prime}(\rho_0-B,\rho_1)$, then
\begin{equation*}
\frac{d}{dt}\tau(X(t))=\tau(\epsilon \L^\dagger X(t)-\div \bm U(t))=0
\end{equation*}
so that $\tau(X(t))=\tau(\rho_1)=1$ (and $\tau(B)=0$).

Now let us turn to the Legendre transform of $G$. We have
\begin{align*}
G^\ast(B,C,\bW)&=\sup_{(A,\bV)\in E}\bigg\lbrace \langle (A,\bV),(B,C,\bW)\rangle-G(A,\bV)\bigg\rbrace\\
&=\sup_{(A,\bV)\in \D}\bigg\lbrace \tau(A(0)B)+\int_0^1 \tau((\dot A(t)+\epsilon \L A(t))C(t))\,dt\\
&\qquad\qquad+\int_0^1\langle \bV(t),\bW(t)\rangle_{\H_{\A,\J}})\,dt\bigg\rbrace.
\end{align*}
Since $(A,\bV)\in \D$ implies $(A+X,\bV)\in \D$ for all $X\in \A$, we have $G^\ast(B,C,\bV)=\infty$ unless $B=0$. Moreover, it follows from the definition of $\D$ that $G^\ast(0,C,\bW)=\infty$ unless $C(t)\geq 0$ for a.e. $t\in [0,1]$.

For $B=0$ we have
\begin{align*}
G^\ast(0,C,\bW)&=\sup_{(A,\bV)\in\D}\left\lbrace\int_0^1 (\tau((\dot A(t)+\epsilon \L A(t))C(t))+\langle \bV(t),\bW(t)\rangle_{\H_{\A,\J}})\,dt\right\rbrace\\
&\leq \sup_{(A,\bV)\in\D}\bigg\lbrace -\int_0^1 \frac 1 2\norm{\bV(t)}_{C(t)}^2\,dt\\
&\qquad\qquad+\int_0^1\langle [C(t)]_{\vec\Lambda}^{1/2}\bV(t),[C(t)]_{\vec\Lambda}^{-1/2}\bW(t)\rangle_{\H_{\A,\J}}\,dt\bigg\rbrace\\
&\leq \frac 1 2\int_0^1\langle [C(t)]_{\vec\Lambda}^{-1} \bW(t),\bW(t)\rangle_{\H_{\A,\J}}\,dt.
\end{align*}
We will show next that the inequalities are in fact equalities. Let $C^\delta=C+\delta$
%
and $\bV^{\delta}(t)=[C(t)^{\delta}]^{-1}\bW(t)$. Moreover, let $f_j=f_{\Lambda_j}$ with the notation from Lemma \ref{lem:deriv_conv_fct}. Since 
\begin{equation*}
\Hh_\A\to \IR,\,B\mapsto \sum_{j\in\J}\langle (df_{j}(C^{\delta}(t))B)\bV^{\delta}_j(t),\bV^{\delta}_j(t)\rangle_{\H_\A}
\end{equation*}
is a bounded linear map that depends continuously on $t$, there exists a unique continuous map $X^{\delta}\colon [0,1]\to \Hh_\A$ such that
\begin{equation*}
\tau(B X^{\delta}(t))=\sum_{j\in\J}\langle (df_{j}(C^{\delta}(t))B)\bV^{\delta}_j(t),\bV^{\delta}_j(t)\rangle_{\H_\A}
\end{equation*}
for every $B\in\Hh_\A$ and $t\in[0,1]$.

Let 
\begin{equation*}
 A^{\delta}\colon [0,1]\to \A_h,\,A^{\delta}(t)=-\frac1  2\int_0^t X^{\delta}(s)\,ds.
\end{equation*}
We claim that $(A^{\delta},\bV^{\delta})\in \D$. Indeed, 
\begin{align*}
\tau(\dot A^{\delta}(t)\rho)&=-\frac 1 2\sum_{j\in\J}\langle (df_{j}(C^{\delta}(t))\rho)\bV^{\delta}_j(t),\bV^{\delta}_j(t)\rangle_{\H_\A}\\
&\leq -\frac 1 2\sum_{j\in\J}\langle [\rho]_{\Lambda_j}\bV^{\delta}_j(t),\bV^{\delta}_j(t)\rangle_{\H_\A}\\
&=-\frac 1 2\norm{\bV^{\delta}(t)}_{\rho}^2,
\end{align*}
where the inequality follows from Lemma \ref{lem:deriv_conv_fct}. Note that we have equality for $\rho=C^{\delta}(t)$.

In particular, for $\rho=C(t)$ we obtain
\begin{equation*}
\tau(\dot A^{\delta}(t)C(t))+\langle \bV^{\delta}(t),\bW(t)\rangle_{\H_{\A,\J}}\leq \frac 1 2\langle [C(t)]_{\vec\Lambda}^{-1}\bW(t),\bW(t)\rangle_{\H_{\A,\J}}.
\end{equation*}
On the other hand,
\begin{align*}
\tau(\dot A^{\delta}(t)C(t))&=-\frac 1 2\sum_{j\in\J}\langle(d f_j(C^\delta(t))(C^\delta(t)-\delta))\bV_j^\delta(t),\bV_j^\delta(t)\rangle_{\H_A}\\
&\geq-\frac 1 2\langle [C^\delta(t)]_{\vec\Lambda}\bV^\delta(t),\bV^\delta(t)\rangle_{\H_{\A,\J}}+\frac 1 2\langle [\delta]_{\vec \Lambda}\bV^\delta(t),\bV^{\delta}(t)\rangle_{\H_{\A,\J}}\\
&\geq -\frac 1 2 \langle \bV^\delta(t),\bW(t)\rangle_{\H_{\A,\J}},
\end{align*}
where we again used Lemma \ref{lem:deriv_conv_fct} for the first inequality.

Put together, we have
\begin{align*}
\frac 1 2\langle [C^\delta(t)]_{\vec\Lambda}^{-1}\bW(t),\bW(t)\rangle_{\H_{\A,\J}}&\leq \tau(\dot A^{\delta}(t)C(t))+\langle \bV^{\delta}(t),\bW(t)\rangle_{\H_{\A,\J}}\\
&\leq \frac 1 2\langle [C(t)]_{\vec\Lambda}^{-1}\bW(t),\bW(t)\rangle_{\H_{\A,\J}},
\end{align*}
and
\begin{align*}
\int_0^1 (\tau(\dot A^\delta(t)C(t))+\langle \bV^\delta(t),\bW(t)\rangle_{\H_{\A,\J}})\,dt\to \frac 1 2\int_0^1 \langle [C(t)]_{\vec\Lambda}^{-1}\bW(t),\bW(t)\rangle_{\H_{\A,\J}}\,dt
\end{align*}
follows from the monotone convergence theorem.

Hence 
\begin{equation*}
G^\ast(0,C,\bW)=\frac 1 2\int_0^1 \langle [C(t)]_{\vec\Lambda}^{-1}\bW(t),\bW(t)\rangle_{\H_{\A,\J}}\,dt
\end{equation*}
if $C(t)\geq 0$ for a.e. $t\in [0,1]$. Together with the formula for $F^\ast$, we obtain
\begin{align*}
&\quad\inf_{(B,C,\bW)\in E^\ast}(F^\ast(-B,-C,-\bW)+G^\ast(B,C,\bW))\\
&=\inf_{(\rho,\bV)\in \CE_\epsilon^\prime(\rho_0,\rho_1)}\frac 1 2 \int_0^1 \langle [\rho(t)]_{\vec\Lambda}^{-1}\bV(t),\bV(t)\rangle_{\H_{\A,\J}}\,dt\\
&=\frac 1 2 \W_{\vec\Lambda,\epsilon}^2(\rho_0,\rho_1),
\end{align*}
where the last equality follows from Proposition \ref{prop:relaxed_paths_CE}.

An application of the Rockefellar--Fenchel theorem yields the desired conclusion.
\end{proof}

\DeclareFieldFormat[article]{citetitle}{#1}
\DeclareFieldFormat[article]{title}{#1} 

\printbibliography

\end{document}